\newtheorem{thm}{Theorem}
\newtheorem{cor}{Corollary}
\newtheorem{lem}{Lemma}
\theoremstyle{definition}
\newtheorem{defn}{Definition}
\newtheorem{remark}{Remark}
\newcommand{\R}{\mathbb{R}}
\newcommand{\RIP}{{\rm RIP}}
\newcommand{\OMP}{{\rm OMP}}
\newcommand{\supp}{{\rm supp}}
\newcommand{\aarg}[1]{\underset{#1}{\rm argmin}}
\date{}
\begin{document}
\bibliographystyle{plain}
\title{A remark about orthogonal matching pursuit algorithm  }
\author{  Zhiqiang Xu\thanks{Supported by the National Natural Science Foundation of China (10871196).}}
 \maketitle

\begin{abstract}
In this note,  we investigate the theoretical properties of Orthogonal Matching Pursuit (OMP), a class of decoder  to recover sparse signal in compressed sensing. In particular, we show that the OMP decoder can  give
$(p,q)$ instance optimality for a large class of encoders with $1\leq p\leq q
\leq 2$ and $(p,q)\neq (2,2)$. We also show that, if the encoding matrix is
drawn from an appropriate distribution, then the OMP decoder is $(2,2)$ instance
optimal in probability.
\end{abstract}

\section{Introduction}
We consider a signal $x\in \R^N$ where $N$ is large and denote by
$\Sigma_k$ the set of $k$-sparse vectors, i.e.,
$$
\Sigma_k:=\{x\in \R^N : \# {\rm supp}(x)\leq k\},
$$
where ${\rm supp}(x)$ is the set of $i$ for which $x_i\neq 0$ and
$\# A$ is the number of elements in the set $A$. Given a norm
$\|\cdot\|_X$ on $\R^N$, we set
$$
\beta_k(x):= \aarg{z\in \Sigma_k}\|x-z\|_X,
$$
and
$$
\sigma_k(x)_X:= \|x-\beta_k(x)\|_X,
$$
and call $\beta_k(x)$ and $\sigma_k(x)_X$ as the {\em the best k-term approximation} and {\em the best k-term approximation
error}, respectively.

 In Compressed Sensing theory, the information we gather about $x$
 can be described by
 $$
 y= \Phi x
 $$
 where $\Phi$ is an $n\times N$ matrix.  To recover $x$ from
 $y$, we use a decoder $\Delta:\R^n\rightarrow \R^N$ and denote $x^*:=\Delta(y)=\Delta(\Phi
 x)$. The $x^*$ can be considered as an approximation of $x$. Following Cohen, Dahmen and DeVore, we  say that a pair $(\Phi,\Delta)$ is {\em instance optimality} of order $k$
 with constant $C$ for the norm $X$ if it satisfies
 \begin{equation}\label{eq:ins}
 \|x-\Delta(\Phi x)\|_X \leq C \sigma_k(x)_X,\quad\quad  \text{ for all } x \in \R^N.
 \end{equation}
In general, one  chooses $X$ as $\ell_p$ (quasi-)norm where $p>0$. To state conveniently, throughout of this paper, we use the subscript $p$ to denote the $\ell_p$ norm.  Using  the notation of \cite{notation}, we say that a pair $(\Phi,\Delta)$ is $(q,p)$ instance optimality of order $k$ with constant $C$ if
$$
 \|x-\Delta(\Phi x)\|_{q} \leq C \frac{\sigma_k(x)_{p}}{k^{1/p-1/q}},\quad\quad  \text{ for all } x \in \R^N.
$$
 The theoretical analysis of instance optimality is presented in \cite{jams1}.

We next introduce a class of encoding matrix.  Following Cand\`{e}s and Tao, we say that the matrix $\Phi$ satisfies the Restricted Isometry Property (RIP) of order $k$ and constant $\delta_k \in (0,1)$ if
\begin{equation}\label{eq:con}
(1-\delta_k) \|x\|_2^2 \leq \|\Phi x\|_2^2 \leq (1+\delta_k) \|x\|_2^2
\end{equation}
holds for all $x\in \Sigma_k$. Throughout the rest of the paper, using the notation of \cite{notation}, we say that the matrix $\Phi$ satisfies $\RIP(k,\delta)$ if $\delta_k<\delta$.

 In the past, one investigated  the instance optimality of $\ell_1$ minimization, which is given by
 $$
 \Delta_1^\epsilon(y):=\aarg{x}\|x\|_{1}, \quad\quad \text{ subject to } \|\Phi x-y\|_2\leq \epsilon.
 $$
In \cite{candes}, Cand\`{e}s improved on the work of Cand\`{e}s, Romberg and Tao \cite{tao} and showed that
$$
\|\Delta_1^\epsilon(\Phi x)-x\|_2\leq C_0\sigma_k(x)_1/\sqrt{s}+C_1\epsilon
$$
provided $\Phi$ satisfies $\RIP(2k,\sqrt{2}-1)$. The result implies that, in the noise-free case, i.e., $\epsilon=0$,
$(\Phi,\Delta_1^0)$  is $(2,1)$ instance optimality of order $k$ if $\Phi$ satisfies $\RIP(2k,\sqrt{2}-1)$.  In \cite{lp}, Saab and Y{\i}lmaz extended the results to $\ell_p$ decoder, where $0<p<1$, which is defined by
$$
\Delta_p^\epsilon(y):=\aarg{x}\|x\|_p\quad\quad \text{subject to } \|\Phi x-y\|_2\leq \epsilon,
$$
and showed that $(\Phi,\Delta_p^0)$ is $(2,p)$ instance optimality of order $k$ if $\Phi$ satisfies some RIP condition (see \cite{lp}). Independently, Foucart and Lai \cite{lai} also proved that $(\Phi,\Delta_p^0)$ is $(2,p)$ instance optimality under other  sufficient conditions.

In compressed sensing, an alternative decoder  is Orthogonal Matching Pursuit (OMP).  The major advantages of OMP are its ease of implementation  and potentially faster than $\Delta_1^\epsilon$ (see \cite{COMP,tropp1,tropp2}). However, so far,  very few theoretical results about the instance optimality of  OMP are known.  The aim of the note is the investigation of  the instance optimality property of  OMP decoder. To state conveniently, we  use $\OMP_M$ to denote the OMP decoder with $M$ iterations (see Algorithm 1). Then combining the methods developed by Cohen, Dahmen and DeVore \cite{jams1} and the result obtained by Zhang \cite{zhang}, we can prove the following result, which is the main result of this note:

\begin{thm}\label{th:lplq}
Suppose that $1\leq p \leq q \leq 2 $ and $0<\delta \leq 1$. We furthermore suppose that $p\neq 2$. Let $\Phi$ be any matrix which satisfies RIP condition
$\RIP(L,\delta)$ and $\delta_k+(1+\delta)\delta_{\alpha k} \leq \delta$, where $\alpha:=\lceil16+15\delta \rceil$  and $
 L:=k\left({N}/{k}\right)^{2-{2}/{p}}$. Then for any signal $x$ and any permutation $e$ with $\|e\|_2\leq \epsilon$, the solution  $x^*:=\OMP_{2(\alpha-1)k}(\Phi x+e)$ obeys
\begin{equation}\label{eq:noise}
\|x^*-x\|_q\,\, \leq\,\, C_0\frac{\sigma_k(x)_p}{k^{1/p-1/q}}+C_1 k^{1/q-1/2} \epsilon,
\end{equation}
where  $C_0=1+C_1+(2\alpha)^{1/q-1/2}$ and $C_1=(2\alpha)^{1/q-1/2}(2(1+\delta)(\sqrt{11+20\delta}+1)+1)$.
\end{thm}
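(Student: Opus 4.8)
The plan is to treat Zhang's robust recovery guarantee for OMP as the engine and wrap it in a Cohen--Dahmen--DeVore style reduction from sparse recovery to instance optimality. I would begin by splitting $x$ into a sparse head and a tail: let $T_0=\supp(\beta_k(x))$ be the support of a best $k$-term approximation, and write $x=x_{T_0}+x_{T_0^c}$, $y=\Phi x_{T_0}+r$ with effective noise $r:=\Phi x_{T_0^c}+e$, so that $y$ is a perturbed measurement of the $k$-sparse vector $x_{T_0}$. The standing hypothesis $\delta_k+(1+\delta)\delta_{\alpha k}\le\delta$ is exactly the RIP condition under which Zhang's theorem operates, so running $\OMP$ for $2(\alpha-1)k$ iterations produces an $x^*$ supported on a set $S^*$ of size $2(\alpha-1)k$ together with an $\ell_2$ estimate $\|x^*-x_{T_0}\|_2\le\kappa\|r\|_2$, where $\kappa=2(1+\delta)(\sqrt{11+20\delta}+1)+1$ is read off from Zhang's bound. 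The same analysis controls the OMP residual $\|y-\Phi x^*\|_2$, and since $2(\alpha-1)k\gg k$ this residual decay forces the selected coordinates to absorb all the dominant entries of $x$, so that the genuine error lives on a tail at level $\sim\alpha k$ rather than at level $k$.

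To get every tail into the right form I would invoke the Stechkin inequality for decreasing rearrangements. Since $x_{T_0^c}$ is the tail of the sorted sequence $|x_{(1)}|\ge|x_{(2)}|\ge\cdots$, one has the elementary bound $\sigma_m(x)_{q'}\le\sigma_k(x)_p/(m-k)^{1/p-1/q'}$ for every $m>k$ and $p\le q'$, which converts any tail measured at a level $m\ge2k$ into the target quantity $\sigma_k(x)_p/k^{1/p-1/q'}$ with constant one (because $m-k\ge k$). Applied with $q'=2$ this is what reduces the relevant $\ell_2$ tail to $\sigma_k(x)_p/k^{1/p-1/2}$; applied with $q'=q$ it handles the $\ell_q$ tail below. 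Combining Zhang's estimate, the capture of the dominant coefficients, and this reduction, I expect to arrive at the robust $\ell_2$ bound $\|x^*-x\|_2\le(\kappa+1)\,\sigma_k(x)_p/k^{1/p-1/2}+\kappa\,\epsilon$.

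Next I would pass from $\ell_2$ to $\ell_q$. Writing $S:=S^*\cup T_0$, so $|S|\le(2\alpha-1)k\le2\alpha k$, I split
\[
\|x^*-x\|_q^q=\|(x^*-x)_S\|_q^q+\|x_{S^c}\|_q^q .
\]
On $S$ the error is supported on at most $2\alpha k$ coordinates, so $\|(x^*-x)_S\|_q\le(2\alpha k)^{1/q-1/2}\|x^*-x\|_2$; this is the source of the factor $(2\alpha)^{1/q-1/2}k^{1/q-1/2}$ in $C_0,C_1$, and combined with the $\ell_2$ bound above (using $k^{1/q-1/2}/k^{1/p-1/2}=k^{-(1/p-1/q)}$) it yields $(2\alpha)^{1/q-1/2}(\kappa+1)\,\sigma_k(x)_p/k^{1/p-1/q}+(2\alpha)^{1/q-1/2}\kappa\,k^{1/q-1/2}\epsilon$. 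On $S^c$ the capture property guarantees that only small entries remain, whence $\|x_{S^c}\|_q\le\sigma_{|S|}(x)_q\le\sigma_k(x)_p/k^{1/p-1/q}$ by Stechkin, supplying the leading ``$1$''. Collecting the contributions matches $C_1=(2\alpha)^{1/q-1/2}\kappa$ and $C_0=1+(2\alpha)^{1/q-1/2}(\kappa+1)=1+C_1+(2\alpha)^{1/q-1/2}$, exactly as stated.

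The step I expect to be the main obstacle is this tail control: showing that the non-sparse part of $x$ contributes only at the sharp scale $\sigma_k(x)_p/k^{1/p-1/q}$ uniformly for $1\le p<2$. The naive route---declaring the whole tail $x_{T_0^c}$ to be measurement noise and bounding $\|\Phi x_{T_0^c}\|_2$ by a single blockwise RIP estimate---is too lossy when $p>1$: for a long, nearly flat tail it overcounts by a factor growing with the tail length, and it also forces the spurious term $\sigma_k(x)_2$, which is \emph{not} controlled by $\sigma_k(x)_p/k^{1/p-1/2}$. The essential point is therefore to extract from Zhang's residual decay the quantitative statement that the $2(\alpha-1)k$ iterations really do capture the dominant coefficients, so that the true error is a tail at level $\sim\alpha k$, where Stechkin is favorable. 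Making this capture precise and then bookkeeping the constants $\kappa$, $1+\delta$ and the Stechkin factors so that they collapse into exactly $C_0$ and $C_1$ is the delicate part; the remaining sorted-rearrangement and norm-interpolation estimates are routine.
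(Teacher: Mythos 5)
Your proposal stalls exactly at the step you flag as ``the main obstacle,'' and the remedy you sketch for it is not one that Zhang's theorem supplies. You correctly observe that decomposing around $T_0=\supp(\beta_k(x))$ and treating $r=\Phi x_{T_0^c}+e$ as noise produces, via the blockwise RIP estimate, the term $\sigma_k(x)_2$, which is \emph{not} bounded by $\sigma_k(x)_p/k^{1/p-1/2}$. But your proposed fix --- extracting from the residual decay a ``capture property'' asserting that the $2(\alpha-1)k$ iterations absorb all dominant entries of $x$, so that the error becomes a tail at level $\sim\alpha k$ --- is left entirely unproved, and it does not follow from Theorem~\ref{th:zhang}: that theorem only compares $\|\Phi x^*-y\|_2$ to $\|\Phi\bar x-y\|_2$ and says nothing about which coordinates OMP selects. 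Greedy methods can and do select wrong coordinates (this is the content of Rauhut's impossibility result cited in the bibliography), so a uniform support-capture statement is not a routine bookkeeping step; it is the hardest imaginable version of the theorem. The same unproved capture property is silently invoked a second time when you bound $\|x_{S^c}\|_q\le\sigma_{|S|}(x)_q$ --- that inequality runs in the wrong direction unless $S$ actually contains the $|S|$ largest entries of $x$, and without it you are left with $\|x_{T_0^c}\|_q=\sigma_k(x)_q$, which again is not controlled by $\sigma_k(x)_p/k^{1/p-1/q}$.

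The paper's resolution is much simpler and requires no capture statement: apply Theorem~\ref{th:zhang} with the $2k$-sparse competitor $\bar x=\beta_{2k}(x)$ rather than $\beta_k(x)$ (this is precisely why the iteration count is $2(\alpha-1)k$ and why the RIP hypothesis involves $2k$ and $2\alpha k$). Then the vector $x^*-\beta_{2k}(x)$ is $2\alpha k$-sparse, RIP converts the residual comparison into $\|x^*-\beta_{2k}(x)\|_2\le(1+\delta)(\sqrt{11+20\delta}+1)(\sigma_{2k}(x)_2+\sigma_{2k}(x)_p/k^{1/p-1/2}+\|e\|_2)$ via Lemma~\ref{le:1}, and every tail now sits at level $2k$, where Lemma~\ref{le:cohen} gives $\sigma_{2k}(x)_2=\sigma_k(x-\beta_k(x))_2\le\sigma_k(x)_p/k^{1/p-1/2}$ and likewise $\sigma_{2k}(x)_q\le\sigma_k(x)_p/k^{1/p-1/q}$. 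You clearly know this ``spend an extra $k$ levels'' Stechkin trick --- you state it with $m\ge 2k$ --- but you apply it only to rearrangement tails of $x$ itself, not to the choice of competitor fed into Zhang's theorem, which is where it actually does the work. With that single change your outline collapses into the paper's proof and the constants come out as stated; without it, the argument has a genuine hole.
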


\begin{algorithm}
\begin{algorithmic}
 \STATE {\bf Input:} encoding  matrix $\Phi$, the vector $y$, maximum allowed sparsity $M$
 \STATE {\bf
Output:} the $x^*$.
 \STATE {\bf Initialize:} $r^0=y,c^0=0,\Lambda^0=\emptyset, \ell=0$.
  \WHILE {  $\ell< M$}
  \STATE {\bf match:} $h^\ell=\Phi^Tr^\ell$
  \STATE{\bf identity:} $\Lambda^{\ell+1}=\Lambda^\ell\cup\{{\rm argmax}_j|h^\ell(j)|\}$
  \STATE{\bf update:}  $c^{\ell+1}=\aarg{z: {\rm supp}(z)\subset \Lambda^{\ell+1}}\|y-\Phi z\|_2$
  \STATE \,\,\qquad\qquad $r^{\ell+1}=y-\Phi c^{\ell+1}$
  \STATE \,\,\qquad\qquad $\ell=\ell+1$
 \ENDWHILE
 \STATE \,\, $x^*=c^{M}$
\end{algorithmic}
\caption{\small{$\OMP_M(y)$}}
\end{algorithm}

 By setting $(q,p)=(2,1)$ and $\epsilon=0$ in Theorem \ref{th:lplq}, we obtain the following Corollary:
\begin{cor}
Suppose that $\Phi$ satisfies the RIP condition   $\delta_{2k}+(1+\delta)\delta_{2\alpha k} \leq \delta $. Then
 $$
 \|\OMP_{2(\alpha-1) k}(\Phi x)-x\|_2 \leq C_2 \sigma_k(x)_1/\sqrt{k},
 $$
 where   $\alpha=\lceil16+15\delta \rceil$ and $C_2={2(1+\delta)}(\sqrt{11+20\delta}+1)+3 $.
\end{cor}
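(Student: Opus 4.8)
The plan is to obtain the corollary as the noise-free, $(q,p)=(2,1)$ specialization of Theorem~\ref{th:lplq}. First I would verify admissibility of this parameter choice: with $p=1$ and $q=2$ we have $1\le p\le q\le 2$ and $p\ne 2$, so the exponent hypotheses of Theorem~\ref{th:lplq} hold. Taking the perturbation $e=0$, hence $\epsilon=0$, places us in the noise-free regime, and the decoder $\OMP_{2(\alpha-1)k}(\Phi x)$, the sparsity budget $2(\alpha-1)k$, and the value $\alpha=\lceil 16+15\delta\rceil$ match those in the theorem verbatim.

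Second, I would reconcile the RIP hypotheses. For $p=1$ the isometry order required by Theorem~\ref{th:lplq} is $L=k(N/k)^{2-2/p}=k(N/k)^0=k$, so the theorem asks for $\RIP(k,\delta)$ together with $\delta_k+(1+\delta)\delta_{\alpha k}\le\delta$. Because the restricted isometry constants are nondecreasing in their order and $\alpha\ge 1$, the corollary's assumption $\delta_{2k}+(1+\delta)\delta_{2\alpha k}\le\delta$ gives $\delta_k\le\delta_{2k}$ and $\delta_{\alpha k}\le\delta_{2\alpha k}$, whence $\delta_k+(1+\delta)\delta_{\alpha k}\le\delta_{2k}+(1+\delta)\delta_{2\alpha k}\le\delta$; moreover $(2+\delta)\delta_{2k}\le\delta_{2k}+(1+\delta)\delta_{2\alpha k}\le\delta$ yields $\delta_k\le\delta_{2k}\le\delta/(2+\delta)<\delta$ since $\delta>0$, so $\RIP(k,\delta)$ holds in the strict sense. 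Thus the corollary's hypothesis is at least as strong as the theorem's, and Theorem~\ref{th:lplq} applies.

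Third, I would specialize the conclusion \eqref{eq:noise} and collapse the constants. With $q=2$ the approximation exponent is $k^{1/p-1/q}=k^{1-1/2}=\sqrt{k}$, turning the first term into $C_0\,\sigma_k(x)_1/\sqrt{k}$; the noise exponent is $k^{1/q-1/2}=k^0=1$, so with $\epsilon=0$ the second term vanishes. The recurring factor satisfies $(2\alpha)^{1/q-1/2}=(2\alpha)^0=1$, which reduces $C_1$ to $2(1+\delta)(\sqrt{11+20\delta}+1)+1$ and $C_0$ to $1+C_1+1=2(1+\delta)(\sqrt{11+20\delta}+1)+3$. This is precisely $C_2$, so \eqref{eq:noise} reads $\|\OMP_{2(\alpha-1)k}(\Phi x)-x\|_2\le C_2\,\sigma_k(x)_1/\sqrt{k}$, as asserted.

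Since the corollary is a direct specialization, there is no substantive obstacle. The only steps that demand care are the constant bookkeeping---checking that setting $q=2$ makes the exponent $1/q-1/2$ vanish everywhere it appears, so that the three separate constants of Theorem~\ref{th:lplq} telescope into the single $C_2$---and the monotonicity observation that lets the stated RIP bound at orders $2k$ and $2\alpha k$ absorb the order-$k$ and order-$\alpha k$ requirements of the theorem.
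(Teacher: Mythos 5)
Your proposal is correct and follows exactly the paper's route: the paper obtains this corollary by setting $(q,p)=(2,1)$ and $\epsilon=0$ in Theorem~1, and your constant bookkeeping ($(2\alpha)^{1/q-1/2}=1$, so $C_0=C'+2=C_2$) matches. Your monotonicity argument reconciling the corollary's RIP hypothesis at orders $2k$ and $2\alpha k$ with the theorem's stated hypothesis at orders $k$ and $\alpha k$ is a careful addition the paper leaves implicit (and is in fact apt, since the theorem's proof, applying Zhang's result to $\bar x=\beta_{2k}(x)$, actually uses the order-$2k$ condition stated in the corollary).
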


\begin{remark}
Theorem \ref{th:lplq} implies that $(\Phi, \OMP)$ is $(q,p)$ instance optimality of order $k$ provided $\Phi$ satisfies RIP condition of order $k(N/k)^{2-2/p}$.  Note that $n\times N$ matrix $\Phi$ can have RIP of order $\tilde{k}$ if $\tilde{k}=O(n/\log(N/n))$. Then, for OMP decoder,   $(q,p)$ instance optimality can be achieved at the price of $O(k(N/k)^{2-2/p}\log(N/k))$ measurements.
As shown in \cite{jams1} (Theorem 7.3), the number of measurements is optimal
up to a constant.
\end{remark}
\begin{remark}
No recover method can improve the term $k^{1/q-1/2} \epsilon$ on the right side of (\ref{eq:noise}) for arbitrary perturbations $e$. To see why this true, suppose that we take $x\in \Sigma_k$ and we know in advance the support of $x$, i.e., $T_0=\supp(x)$. Using this additional information, as shown in \cite{tao}, Least-Square has the best performance to recover $x$. Set $y=\Phi x+e$. Then
$$
x^*=
\begin{cases}
(\Phi_{T_0}^t\Phi_{T_0})^{-1}\Phi_{T_0}^ty, & \hbox{ on $T_0$ },\\
0,  & \hbox{ elsewhere }.
\end{cases}
$$
A simple observation is that  $(x^*-x)_{T_0^c}=0$ and
$$
(x^*-x)_{T_0}=(\Phi_{T_0}^t\Phi_{T_0})^{-1}\Phi_{T_0}^te.
$$
Here, we use $T_0^c$ to denote the complement of $T$ and $x_{T_0}$ to denote the vector which agrees with $x$ on $T_0$ and has all components equal to zero on $T_0^c$. Then H\"{o}lder inequality implies that
$$
\|x^*-x\|_q=\|(\Phi_{T_0}^t\Phi_{T_0})^{-1}\Phi_{T_0}^te\|_q \leq k ^{1/q-1/2}
\|(\Phi_{T_0}^t\Phi_{T_0})^{-1}\Phi_{T_0}^te\|_2,
$$
where the equality holds for some non-zero permutation $e$. Since $\Phi$ satisfies RIP condition of order $L$,
$$
\|(\Phi_{T_0}^t\Phi_{T_0})^{-1}\Phi_{T_0}^te\|_2 \approx \| \Phi_{T_0}^te\|_2 \approx \epsilon,
$$
which implies that $\|x^*-x\|_q\approx k^{1/q-1/2}\epsilon$ for some non-zero permutation $e$ provided $x\in \Sigma_k$.
\end{remark}
\bigskip

We next consider the $(2,2)$ instance optimality. As pointed out in \cite{jams1}, to obtain $(2,2)$ instance optimality with order $k=1$,
 one has to require the number of measurements is  $O(N)$, which is not what we hope.
 Hence, {instance optimality in probability} is the proper formulation in $\ell_2$. We let $\Omega$
 be a probability space with probability measure $P$ and let $\Phi=\Phi(\omega), \omega\in \Omega,$
 be an $n\times N$ random matrix and suppose $\Delta(\omega)$ is a corresponding family of
 decoders. We say that $(\Phi(\omega),\Delta(\omega))$ is {\em $(2,2)$ instance optimality in probability } of  order $k$ with constant $C$ if
 $$
 \|x-\Delta(\Phi x)\|_2 \leq C\sigma_k(x)_2
 $$
holds with high probability for this particular $x$.
 To state our results, we first introduce two properties that the  random matrix $\Phi$ should satisfy (see  \cite{jams1}).
 \begin{defn}
 We say that the $n\times N$ random matrix $\Phi$ satisfies RIP of order $k$
 with constants $\delta_k$ and probability
 $1-\epsilon$ if there is a set $\Omega_0\subset \Omega$ with $P(\Omega_0)\geq  1-\epsilon$ such that for all $\omega\in \Omega_0$ the matrix  $\Phi(\omega)$ satisfies
 $$
(1-\delta_k)\|x\|_2^2 \leq \|\Phi(\omega) x\|_2^2 \leq (1+\delta_k)\|x\|_2^2, \quad
\text{ for all } x\in \Sigma_k.
 $$
 \end{defn}
 \begin{defn}\label{de:bound}
 We say that the random matrix $\Phi$ has the boundedness property
 with constant $C$ and probability $1-\epsilon$ if for each $x\in
 \R^N$ there is a set $\Omega_0(x)\subset \Omega$ with $P(\Omega_0(x))\geq 1-\epsilon
 $ such that for all $\omega \in \Omega_0(x)$,
 \begin{equation}\label{eq:bound}
 \|\Phi(\omega)x\|_2^2 \leq C\|x\|_2^2.
 \end{equation}
 \end{defn}
The key thing in Definition \ref{de:bound} is that the set of $\omega$'s where
(\ref{eq:bound}) holds depends on $x$.
 The both properties above have been shown for random matrices of the Gaussian or Bernoulli type (see \cite{bound}).
Now we can state the result of instance optimality  with probability for $\ell_2$ norm.
 \begin{thm}\label{th:l2}
 Suppose $0<\delta \leq 1$ and $\alpha:=\lceil16+15\delta \rceil$.
 Assume that $\Phi$ is a random matrix which satisfies the RIP
 condition   $\delta_k+(1+\delta)\delta_{\alpha k} \leq \delta$ with probability  $1-\epsilon$, and also satisfies the boundedness
property with  constant $C$ and probability $1-\epsilon$. Then for each $x\in  \R^N$, there exists a set $\Omega(x)\subset \Omega$ with $P(\Omega(x))\geq
 1-2\epsilon$ such that for all $\omega \in \Omega(x)$ and
 $\Phi=\Phi(\omega)$,
 $$
 \|x^*-x\|_2 \leq C_3 \sigma_k(x)_2
 $$
 where $x^*:=\OMP_{(\alpha-1)k}(\Phi x)$ and $C_3=1+\sqrt{C(1+\delta)}(1+\sqrt{11+20\delta})$.
 \end{thm}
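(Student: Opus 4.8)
The plan is to reduce this instance-optimality statement to the noise-robust recovery of a single $k$-sparse vector: I would split $x$ into its best $k$-term part plus a tail, regard the tail as measurement noise, and then pay for that noise using the boundedness property rather than RIP. This last point is the whole reason the result can hold with only $O(k\log(N/k))$ measurements instead of the $O(N)$ flagged just before the statement.

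Concretely, I would fix $x\in\R^N$ and set $x_k:=\beta_k(x)$, so that $x_k\in\Sigma_k$ and $\|x-x_k\|_2=\sigma_k(x)_2$. Writing
$$
\Phi x = \Phi x_k + e,\qquad e:=\Phi(x-x_k),
$$
I can view $y=\Phi x$ as a noisy sample of the $k$-sparse vector $x_k$ with perturbation $e$. Next I would assemble the good event. Let $\Omega_{\mathrm{RIP}}$ be the set on which $\Phi$ satisfies $\delta_k+(1+\delta)\delta_{\alpha k}\le\delta$, which has probability at least $1-\epsilon$ by hypothesis, and let $\Omega_0(x-x_k)$ be the set furnished by the boundedness property applied to the \emph{fixed} vector $x-x_k$, also of probability at least $1-\epsilon$. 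Setting $\Omega(x):=\Omega_{\mathrm{RIP}}\cap\Omega_0(x-x_k)$, the union bound gives $P(\Omega(x))\ge 1-2\epsilon$. The crucial feature here — and precisely what makes the statement hold "for each $x$" (in probability) rather than "for all $x$" — is that $\Omega_0(x-x_k)$ is allowed to depend on the vector $x-x_k$; I never demand that $\Phi$ be bounded on all vectors at once.

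On $\Omega(x)$ the boundedness property yields
$$
\|e\|_2^2=\|\Phi(x-x_k)\|_2^2\le C\|x-x_k\|_2^2=C\,\sigma_k(x)_2^2,
$$
so $\|e\|_2\le\sqrt{C}\,\sigma_k(x)_2$. Because $x_k\in\Sigma_k$, $y=\Phi x_k+e$, and the RIP condition $\delta_k+(1+\delta)\delta_{\alpha k}\le\delta$ holds on $\Omega(x)$, I would invoke the deterministic noise-robust OMP recovery guarantee of Zhang \cite{zhang} — the same $k$-sparse recovery engine that drives Theorem \ref{th:lplq}, but now with no need for the restriction $p\neq2$, since the $k$-sparse part has already been separated and no uniform bound is attempted. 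With $(\alpha-1)k$ iterations this gives
$$
\|x^*-x_k\|_2\le\sqrt{1+\delta}\,\bigl(1+\sqrt{11+20\delta}\bigr)\|e\|_2 .
$$
Combining by the triangle inequality,
$$
\|x^*-x\|_2\le\|x^*-x_k\|_2+\|x_k-x\|_2\le\sqrt{1+\delta}\,\bigl(1+\sqrt{11+20\delta}\bigr)\sqrt{C}\,\sigma_k(x)_2+\sigma_k(x)_2=C_3\,\sigma_k(x)_2,
$$
with $C_3=1+\sqrt{C(1+\delta)}\bigl(1+\sqrt{11+20\delta}\bigr)$, exactly as claimed.

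The probabilistic wrapper is routine; the substantive work sits entirely inside the deterministic $k$-sparse bound. I therefore expect the main obstacle to be verifying that $(\alpha-1)k$ iterations really suffice under the stated condition with $\alpha=\lceil16+15\delta\rceil$ — that is, that within $(\alpha-1)k$ steps OMP either captures the support of $x_k$ or drives the residual down enough that the $k$-sparse-with-noise error constant is precisely $\sqrt{1+\delta}\,(1+\sqrt{11+20\delta})$. If that deterministic constant were off, the clean identity producing $C_3$ would not close, so matching it is the delicate point.
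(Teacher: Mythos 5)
Your proposal follows essentially the same route as the paper's proof: the same decomposition $x=\beta_k(x)+(x-\beta_k(x))$ with the tail treated via the boundedness property on the fixed vector $x-\beta_k(x)$, the same intersection of the two probability-$(1-\epsilon)$ events, and the same deterministic bound $\|x^*-\beta_k(x)\|_2\le\sqrt{1+\delta}\,(1+\sqrt{11+20\delta})\,\|\Phi(x-\beta_k(x))\|_2$ obtained by combining Zhang's residual bound with the triangle inequality and the RIP lower bound on the $\alpha k$-sparse difference $x^*-\beta_k(x)$. The only difference is presentational (you package that deterministic step as a black box where the paper writes out the chain of inequalities), and the constant $C_3$ matches exactly.
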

\begin{remark}
According to Theorem \ref{th:l2}, OMP can achieve (2,2) instance optimality in
probability of order $k$ after at least $15k$ steps. Suppose that $\Phi$
satisfy the conditions in Theorem \ref{th:l2}. Then, the following question is
interesting: {\em what is the minimal value of $c_0$ for which OMP can achieve
(2,2) instance optimality in probability of order $k$ after $c_0k$ iterations?}
\end{remark}

 \section{The proofs of Theorem  \ref{th:lplq} and Theorem  \ref{th:l2}}

 We first recall a result obtained by T. Zhang, which plays an important role in our analysis.
 \begin{thm}(\cite{zhang})\label{th:zhang} Let $\bar x\in \R^N$ and $0<\delta \leq 1$.
If the RIP condition $\delta_{\|\bar x\|_0}+(1+\delta)\delta_{\alpha\|\bar x\|_0} \leq \delta $ holds, then when $s=(\alpha-1) \|\bar x\|_0$, we have
 $$
 \|\Phi x^*-y\|_2^2 \leq (11+20\delta)\|\Phi \bar x-y\|_2^2,
 $$
 where  $x^*=\OMP_s(y)$ and  $\alpha=\lceil16+15\delta \rceil$.
 \end{thm}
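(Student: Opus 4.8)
The statement to be proved is the residual bound for $\OMP_s$; I treat $\bar x$ and $y$ as fixed and write $T:=\supp(\bar x)$, $k:=\|\bar x\|_0=\#T$, $r^\ell:=y-\Phi c^\ell$ for the residual after $\ell$ iterations, $R_\ell:=\|r^\ell\|_2^2$, and $\bar r:=y-\Phi\bar x$ with $\bar R:=\|\bar r\|_2^2=\|\Phi\bar x-y\|_2^2$. Since $\Phi x^*-y=-r^s$, the claim is $R_s\le(11+20\delta)\bar R$ for $s=(\alpha-1)k$. Two structural facts organize everything. First, the update step is a least-squares projection onto a growing subspace, so $r^{\ell+1}\perp\mathrm{range}(\Phi_{\Lambda^{\ell+1}})$ and $R_\ell$ is non-increasing in $\ell$. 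Second, the completion fact: if ever $T\subseteq\Lambda^\ell$, then $\bar x$ is feasible for the minimization defining $c^\ell$, whence $R_\ell\le\bar R$ and we are already done. Thus it suffices to show that within $s$ iterations either all of $T$ has been selected or the energy has already fallen below $(11+20\delta)\bar R$.

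Next I would establish two quantitative lemmas. (i) A per-step energy decrease: because the update over $\Lambda^{\ell+1}$ does at least as well as adjusting the single new column $\phi_{j_\ell}$, and RIP gives $\|\phi_{j_\ell}\|_2^2\le 1+\delta_1$, minimizing $\|r^\ell-t\phi_{j_\ell}\|_2^2$ over $t$ yields
$$R_\ell-R_{\ell+1}\ \ge\ \frac{\|\Phi^Tr^\ell\|_\infty^2}{1+\delta_1}.$$
(ii) A correlation lower bound: from $r^\ell-\bar r=\Phi(\bar x-c^\ell)$ one gets $\langle r^\ell,\Phi(\bar x-c^\ell)\rangle=R_\ell-\langle r^\ell,\bar r\rangle\ge\sqrt{R_\ell}(\sqrt{R_\ell}-\sqrt{\bar R})$; the orthogonality $\Phi_{\Lambda^\ell}^Tr^\ell=0$ kills the part of $\bar x-c^\ell$ on $\Lambda^\ell$, so only the $m_\ell:=\#(T\setminus\Lambda^\ell)$ surviving target coordinates contribute, and Cauchy–Schwarz together with the RIP lower bound on $\|\Phi(\bar x-c^\ell)\|_2$ (the support of $\bar x-c^\ell$ has size at most $k+s=\alpha k$, exactly where the hypothesis on $\delta_{\alpha k}$ is used) give
$$\|\Phi^Tr^\ell\|_\infty\ \ge\ \sqrt{1-\delta}\;\frac{\sqrt{R_\ell}\,(\sqrt{R_\ell}-\sqrt{\bar R})}{\sqrt{m_\ell}\,(\sqrt{R_\ell}+\sqrt{\bar R})},\qquad m_\ell\le k.$$
Combined, these show that while $R_\ell$ exceeds the threshold the energy drops by a definite fraction per step, a fraction that sharpens as the uncaptured deficit $m_\ell$ shrinks.

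With these in hand I would argue by contradiction. Assume $R_s>(11+20\delta)\bar R$; by monotonicity $R_\ell>(11+20\delta)\bar R$ for every $\ell\le s$, so the decrease estimate is active at every step. I would split the $s$ iterations into productive ones, those that enlarge $T\cap\Lambda^\ell$ and hence reduce $m_\ell$ (at most $k$ of these), and unproductive ones, and track a combined potential in the pair $(R_\ell,m_\ell)$. The goal is to prove that the number of unproductive iterations compatible with $R_\ell$ staying above the threshold is at most a constant multiple of $k$; together with the $k$ productive steps this forces $s=(\alpha-1)k$ to suffice to drive $R$ below the threshold, contradicting the assumption. Tuning the thresholds in this bookkeeping is precisely what produces the constants $\alpha=\lceil16+15\delta\rceil$ and $11+20\delta$.

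The hard part is this last counting step, for a subtle reason: the per-step estimate above only yields geometric decay at rate $1-\Theta(1/m_\ell)$, which on its own would demand $\Omega\!\big(k\log(\|y\|_2/\|\bar r\|_2)\big)$ iterations, a quantity depending on the a priori unbounded initial residual that cannot be absorbed into a fixed multiple of $k$. The genuine content is therefore a dichotomy: when $R_\ell$ is large relative to $\bar R$, a step either captures a new target index (reducing $m_\ell$) or must select a column of unusually large correlation, collapsing the residual by a constant factor. Making this dichotomy quantitative, so that slow steps are provably productive while unproductive steps are provably fast, and then optimizing the resulting two-parameter recursion to extract the stated constants, is the technical heart of the argument and where I expect essentially all the effort to go.
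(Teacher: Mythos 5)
Your setup and both preliminary lemmas check out, and they match the standard opening of Zhang's argument: the monotonicity of $R_\ell$, the completion fact, the single-column decrease $R_\ell-R_{\ell+1}\geq \|\Phi^T r^\ell\|_\infty^2/(1+\delta_1)$, and the correlation bound obtained from $r^\ell-\bar r=\Phi(\bar x-c^\ell)$ together with $\Phi_{\Lambda^\ell}^T r^\ell=0$ and the RIP on supports of size at most $\alpha k$ are all correct. But note that the paper you are being checked against contains no proof of this statement at all --- Theorem \ref{th:zhang} is quoted verbatim from \cite{zhang} --- so the only benchmark is Zhang's own argument, and measured against that, your proposal stops exactly where the theorem begins. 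You say so yourself: the counting step is deferred as ``the technical heart.'' That deferral is the gap, and it is not a routine one.

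Moreover, the dichotomy you propose to fill it --- ``slow steps are provably productive while unproductive steps are provably fast'' --- is not the mechanism that makes the theorem true, and I do not believe it can be made to work as stated. Your lemmas (i)--(ii) yield $R_{\ell+1}\leq R_\ell\bigl(1-c/m_\ell\bigr)$ above the threshold, and this estimate is completely indifferent to whether the selected index lies in $T$: an off-support selection can decrease the residual by exactly the same $\Theta(R_\ell/m_\ell)$ amount as an on-support one, and nothing in the geometry forces an unproductive step to collapse the residual by a constant factor (take $\bar x$ with $k$ equal-magnitude entries and moderately coherent off-support columns: every step is ``slow'' and no capture is guaranteed). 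Zhang's actual resolution is different in kind: he stratifies the uncaptured coefficients of $\bar x$ by magnitude and observes that coefficients whose energy is a small fraction of the current residual gap never need to be captured at all --- their contribution is absorbed into the \emph{multiplicative} slack that is the very source of the constant $11+20\delta$. The effective $m_\ell$ in the decay rate can therefore be replaced by the number of coefficients that are large at the current scale; halving the residual gap costs a number of iterations proportional to that count, and summing over dyadic scales telescopes to $O(k)$ iterations rather than the $k\log\bigl(\|y\|_2^2/\bar R\bigr)$ your recursion produces, with the bookkeeping optimized to give $\alpha=\lceil 16+15\delta\rceil$. Without this idea --- trading the demand of exact support capture for a multiplicative tolerance, amortized over coefficient magnitudes --- the two-parameter recursion in $(R_\ell,m_\ell)$ cannot close, for precisely the reason you yourself identify: the initial residual is unbounded relative to $\bar R$. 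So your proposal is an accurate map of where the difficulty lies, but the theorem's actual content is the part left blank, and the sketch you offer for filling it points in the wrong direction.
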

We also need the following lemmas:
 \begin{lem}\label{le:1}
 Let $\Phi$ be any matrix which satisfies $\RIP(L,\delta)$ with $ \delta<1$ and
 $$
 L:=k\left(\frac{N}{k}\right)^{2-\frac{2}{p}}.
 $$
 Then for any $z\in \R^N$ and $1\leq p <2$ we have
 $$
 \|\Phi z\|_2\leq \sqrt{1+\delta}(\|z\|_2+\|z\|_p/k^{1/p-1/2}).
 $$
 \end{lem}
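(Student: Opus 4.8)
The plan is to reduce the bound for a general $z\in\R^N$ to the sparse case, where $\RIP(L,\delta)$ applies verbatim, by the familiar device of splitting $z$ into consecutive blocks of its largest coordinates. First I would order the coordinates of $z$ by decreasing magnitude and partition the index set into blocks $T_0,T_1,T_2,\dots$ of size $L$ (the last block possibly smaller), with $T_0$ holding the $L$ largest entries, $T_1$ the next $L$, and so on. Each $z_{T_j}$ is $L$-sparse, so $\RIP(L,\delta)$ gives $\|\Phi z_{T_j}\|_2\le\sqrt{1+\delta}\,\|z_{T_j}\|_2$, and the triangle inequality yields
\begin{equation*}
\|\Phi z\|_2\le\sum_{j\ge0}\|\Phi z_{T_j}\|_2\le\sqrt{1+\delta}\Big(\|z_{T_0}\|_2+\sum_{j\ge1}\|z_{T_j}\|_2\Big).
\end{equation*}
Bounding $\|z_{T_0}\|_2\le\|z\|_2$ already accounts for the first term on the right-hand side, so everything reduces to showing that the tail $\sum_{j\ge1}\|z_{T_j}\|_2$ is at most $\|z\|_p/k^{1/p-1/2}$.

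For the tail I would use the standard comparison between consecutive blocks. Since every coordinate of $z_{T_j}$ is no larger in magnitude than every coordinate of the full block $z_{T_{j-1}}$, each entry of $z_{T_j}$ is at most $L^{-1/p}\|z_{T_{j-1}}\|_p$; squaring, summing the at most $L$ terms, and taking the square root gives the chaining estimate $\|z_{T_j}\|_2\le L^{1/2-1/p}\|z_{T_{j-1}}\|_p$ for each $j\ge1$. Summing over $j$ and reindexing,
\begin{equation*}
\sum_{j\ge1}\|z_{T_j}\|_2\le L^{1/2-1/p}\sum_{i\ge0}\|z_{T_i}\|_p ,
\end{equation*}
where the sum on the right runs over one fewer block than the total number $m:=\lceil N/L\rceil$ of nonempty blocks, hence over at most $m-1$ terms. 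Applying H\"older's inequality to these $\|z_{T_i}\|_p$ with exponents $p$ and $p/(p-1)$, together with $\sum_i\|z_{T_i}\|_p^p=\|z\|_p^p$, bounds this by $(m-1)^{1-1/p}\|z\|_p$.

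It remains to verify the exponent bookkeeping, which is the only real content and exactly where the prescribed value of $L$ is used. Since $\lceil N/L\rceil-1<N/L$ and $1-1/p\ge0$, one has $(m-1)^{1-1/p}\le(N/L)^{1-1/p}$, and a direct computation gives $L^{1/2-1/p}(N/L)^{1-1/p}=N^{1-1/p}L^{-1/2}$; substituting $L=k(N/k)^{2-2/p}$ collapses this to $k^{1/2-1/p}=1/k^{1/p-1/2}$. Thus $\sum_{j\ge1}\|z_{T_j}\|_2\le\|z\|_p/k^{1/p-1/2}$, and combining with the first paragraph completes the proof. I expect the main obstacle to be precisely this calibration together with the across-block estimate: the definition $L=k(N/k)^{2-2/p}$ is chosen so that $L^{1/2-1/p}(N/L)^{1-1/p}$ equals $k^{1/2-1/p}$ on the nose, so the estimate is tight in the exponents and leaves no room for slack. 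The mild rounding in $m=\lceil N/L\rceil$ causes no loss here, since separating $T_0$ removes one block from the count and the strict inequality $\lceil N/L\rceil-1<N/L$ then absorbs it.
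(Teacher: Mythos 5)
Your proposal is correct and follows essentially the same route as the paper: the same partition into blocks of size $L$ ordered by decreasing magnitude, the same RIP-plus-triangle-inequality step, the same across-block estimate $\|z_{T_j}\|_2\le L^{1/2-1/p}\|z_{T_{j-1}}\|_p$, and the same H\"older and exponent calibration showing $L^{1/2-1/p}(N/L)^{1-1/p}=k^{1/2-1/p}$. The only difference is cosmetic indexing of the blocks, and your bookkeeping of the block count is if anything slightly more careful than the paper's.
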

 \begin{proof}
 Let $T_1$ denote the set of  indices of the $L$ largest entries of $z$, $T_2$ the next $L$ largest, and so on. The last set $T_h$ defined this way may have less than $L$ elements. Set $T_0:=\cup_{j=2}^h T_j$. Then
 \begin{eqnarray}
 \|\Phi z\|_2 &\leq& \|\Phi z_{T_1}\|_2+\|\Phi z_{T_2}\|_2+\cdots+\|\Phi z_{T_h}\|_2 \nonumber\\
 &\leq & \sqrt{1+\delta}(\|z_{T_1}\|_2+\|z_{T_2}\|_2+\cdots+\|z_{T_h}\|_2).\label{eq:le1}
 \end{eqnarray}
 Since for any $i\in T_{j+1}$ and $i'\in T_j$, we have $|z_i|\leq |z_{i'}|$, which implies that
 $$
 |z_i|^p\leq L^{-1}\|z_{T_j}\|_p^p.
 $$
Then we obtain that
 $$
 \|z_{T_{j+1}}\|_2 \leq L^{1/2-1/p} \|z_{T_j}\|_p,\quad j=1,\ldots,h-1.
 $$
Noting $h-1\leq N/L$, we obtain that
 \begin{eqnarray}
 & &\sum_{j=2}^h\|z_{T_j}\|_2 \leq L^{1/2-1/p}\sum_{j=1}^{h-1}\|z_{T_j}\|_p \nonumber\\
 &\leq&  L^{1/2-1/p} (h-1)^{1-1/p}\|z\|_p\leq  \|z\|_p/k^{1/p-1/2}.\label{eq:le2}
 \end{eqnarray}
 Then (\ref{eq:le1}) and (\ref{eq:le2}) give
  $$
 \|\Phi z\|_2 \leq \sqrt{1+\delta}(\|z\|_2+\|z\|_p/k^{1/p-1/2}).
 $$
 \end{proof}

\begin{lem}(\cite{jams1})\label{le:cohen}
 Set $r:=1/p-1/q$ and suppose $r\geq 0$. Then
 $$
 \sigma_k(z)_{q}\leq \|z\|_p/k^{r}.
$$
 \end{lem}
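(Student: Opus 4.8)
The plan is to write the best $k$-term error explicitly and then control the tail of the sorted sequence by a single exponent split. First I would note that, for the $\ell_q$ norm, the best $k$-term approximant $\beta_k(z)$ is obtained by retaining the $k$ entries of $z$ that are largest in magnitude; letting $T$ be the index set of those entries, we have $\sigma_k(z)_q=\|z_{T^c}\|_q$, where $z_{T^c}$ agrees with $z$ off $T$ and vanishes on $T$. If $r=0$ (that is, $p=q$) the claim is immediate since $\|z_{T^c}\|_p\leq\|z\|_p$, and if $z\in\Sigma_k$ both sides vanish, so I may assume $p<q$ and $z\notin\Sigma_k$.

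The key step combines a splitting of the exponent with the elementary bound on the $k$-th largest coefficient. Write $a:=\min_{i\in T}|z_i|$, the magnitude of the $k$-th largest entry. Since each of the $k$ retained coordinates has magnitude at least $a$, one gets $k\,a^p\leq\|z_T\|_p^p\leq\|z\|_p^p$, hence $a\leq\|z\|_p/k^{1/p}$. On the tail $T^c$ every coordinate satisfies $|z_i|\leq a$, so using $q\geq p$ I would estimate
$$
\|z_{T^c}\|_q^q=\sum_{i\in T^c}|z_i|^{\,q-p}\,|z_i|^p\leq a^{\,q-p}\sum_{i\in T^c}|z_i|^p\leq a^{\,q-p}\,\|z\|_p^p.
$$
Substituting $a^{\,q-p}\leq\|z\|_p^{\,q-p}/k^{(q-p)/p}$ gives $\|z_{T^c}\|_q^q\leq\|z\|_p^q/k^{(q-p)/p}$, and taking $q$-th roots yields $\sigma_k(z)_q\leq\|z\|_p/k^{(q-p)/(pq)}=\|z\|_p/k^{r}$, since $(q-p)/(pq)=1/p-1/q=r$.

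I do not expect a genuine obstacle here: the argument is self-contained and elementary. The only point requiring a little care is the choice of the split $|z_i|^q=|z_i|^{\,q-p}|z_i|^p$ rather than the more naive route of bounding each sorted coefficient by $\|z\|_p\,j^{-1/p}$ and summing $\sum_{j>k}j^{-q/p}$; the latter reproduces the correct power of $k$ but only up to a multiplicative constant $(q/p-1)^{-1}$ that blows up as $q\to p$, whereas the split above delivers the sharp constant $1$ asserted in the statement. Because the result is classical and already appears in \cite{jams1}, it would also be legitimate simply to cite it; I record the short proof for completeness.
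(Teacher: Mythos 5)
Your proof is correct and is exactly the standard argument for this classical inequality; the paper itself gives no proof, merely citing \cite{jams1}, where essentially this same exponent-splitting argument (bounding the tail entries by the $k$-th largest magnitude $a$ and using $k a^p \leq \|z\|_p^p$) appears. Nothing to add.
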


We now have all ingredients to prove our conclusion:
 \begin{proof}[Proof of Theorem \ref{th:lplq}]  We first consider the $(2,p)$ case with $1\leq p<2$. Taking  $\bar x:=\beta_{2k}(x)$ in Theorem \ref{th:zhang}, we obtain that
 \begin{eqnarray*}
 \|\Phi x^*-y\|_2 &\leq &\sqrt{11+20\delta}\|\Phi \beta_{2k}(x)-\Phi x-e\|_2\\
 &\leq& \sqrt{11+20\delta}(\|\Phi \beta_{2k}(x)-\Phi x\|_2+\|e\|_2),
 \end{eqnarray*}
 where $x^*:=\OMP_{2(\alpha-1) k}(\Phi x+e)$.
Noting that $\# \supp(x^*-\beta_{2k}(x))\leq 2\alpha k$ and ${1}/{\sqrt{1-\delta_{2\alpha k}}}\leq \sqrt{1+\delta}$, we have
 \begin{eqnarray}\label{eq:pr1}
 & &\|x^*-\beta_{2k}(x)\|_2\nonumber \\
  &\leq& \sqrt{1+\delta} \|\Phi x^*-\Phi \beta_{2k}(x)\|_2\nonumber\\
  &\leq& \sqrt{1+\delta} (\|\Phi x^*-y\|_2+\|\Phi \beta_{2k}(x)-y\|_2)\nonumber\\
  &\leq & \sqrt{1+\delta}(\sqrt{11+20\delta}+1)(\|\Phi \beta_{2k}(x)-\Phi x\|_2+\|e\|_2) \nonumber\\
  &\leq &{(1+\delta)}(\sqrt{11+20\delta}+1) (\sigma_{2k}(x)_2+\sigma_{2k}(x)_p/k^{1/p-1/2}+\|e\|_2),
 \end{eqnarray}
 where the last inequality uses Lemma \ref{le:1}.
 We now consider the term $\sigma_{2k}(x)_2$ which appears on the right side of (\ref{eq:pr1}).  Lemma \ref{le:cohen} gives
 \begin{eqnarray}\label{eq:midpr}
 \sigma_{2k}(x)_2=\sigma_k(x-\beta_k(x))_2 \leq \|x-\beta_k(x)\|_p/k^{1/p-1/2}=\sigma_k(x)_p/k^{1/p-1/2}.
 \end{eqnarray}
 Combining (\ref{eq:pr1}) and (\ref{eq:midpr}), we obtain that
 \begin{equation}\label{eq:6}
  \|x^*-\beta_{2k}(x)\|_2 \leq 2{(1+\delta)}(\sqrt{11+20\delta}+1) (\sigma_{k}(x)_p/k^{1/p-1/2}+\|e\|_2).
 \end{equation}
Equations (\ref{eq:midpr}) and (\ref{eq:6}) imply that
 \begin{eqnarray}\label{eq:pr2}
 & &\|x^*-x\|_2\nonumber\\
  &\leq& \sigma_{2k}(x)_2+\|x^*-\beta_{2k}(x)\|_2\nonumber\\
  &\leq& \sigma_k(x)_p/ k^{1/p-1/2}+\|x^*-\beta_{2k}(x)\|_2 \nonumber\\
 &\leq & \left({2(1+\delta)}(\sqrt{11+20\delta}+1)+1\right) (\sigma_k(x)_p/k^{1/p-1/2}+\|e\|_2).
 \end{eqnarray}

We next consider the general case.  We first recall that H\"{o}lder inequality, which says
$$
(|a_1|^u+\cdots+|a_k|^u)^{1/u}(|b_1|^v+\cdots+|b_k|^v)^{1/v} \geq |a_1b_1|+\cdots+|a_kb_k|
$$
provided $1/u+1/v=1$ and $u,v>0$ where $a,b\in \R^k$. Then, by  H\"{o}lder inequality, when $1\leq q<2$, we have
$$
\left((|b_1|^q)^{2/q}+\cdots+(|b_k|^q)^{2/q}\right)^{q/2}\cdot k^{1-q/2}\geq |b_1|^q+\cdots + |b_k|^q,
$$
which implies that
\begin{equation}\label{eq:hold}
\|b\|_2\,\, \geq\,\, \frac{\|b\|_q}{k^{1/q-1/2}}.
\end{equation}
It follows from expressions (\ref{eq:midpr}) and
(\ref{eq:pr2}) that the condition
 \begin{eqnarray*}
 \|x^*-\beta_{2k}(x)\|_2 &\leq &\|x^*-x\|_2+\sigma_{2k}(x)_2 \\
 &\leq&C'(\sigma_k(x)_p/k^{1/p-1/2}+\|e\|_2)+\sigma_k(x)_p/k^{1/p-1/2} \\
 &\leq & (C'+1)\sigma_k(x)_p/k^{1/p-1/2}+C'\|e\|_2,
 \end{eqnarray*}
 where $C'={2(1+\delta)}(\sqrt{11+20\delta}+1)+1$.
 Note that $\#\supp(x^*-\beta_{2k}(x))\leq 2\alpha k$. The inequality (\ref{eq:hold}) provides the bound
 $$
 \|x^*-\beta_{2k}(x)\|_2\geq \frac{\|x^*-\beta_{2k}(x)\|_q}{(2\alpha k)^{1/q-1/2}}.
 $$
 Then, combining inequalities above, we arrive at
 \begin{equation}\label{eq:budeng}
 \|x^*-\beta_{2k}(x)\|_q \leq (2\alpha)^{1/q-1/2} (C'+1) k^{1/q-1/p}\sigma_k(x)_p+C'(2\alpha k)^{1/q-1/2}\|e\|_2.
 \end{equation}
  From Lemma \ref{le:cohen} and expression (\ref{eq:budeng}), we obtain the relation
 \begin{eqnarray*}
 & &\|x^*-x\|_q \\
  &\leq& \|x-\beta_{2k}(x)\|_q+\|x^*-\beta_{2k}(x)\|_q=\sigma_{2k}(x)_q+ \|x^*-\beta_{2k}(x)\|_q\\
  &\leq & \sigma_k(x)_p/k^{1/p-1/q}+(2\alpha)^{1/q-1/2} k^{1/q-1/p}(C'+1)\sigma_k(x)_p+C'(2\alpha k)^{1/q-1/2}\|e\|_2 \\
  &\leq &(1+(2\alpha)^{1/q-1/2}(C'+1))\frac{\sigma_k(x)_p}{k^{1/p-1/q}}+C'(2\alpha)^{1/q-1/2} k^{1/q-1/2} \epsilon,
 \end{eqnarray*}
  where $C'={2(1+\delta)}(\sqrt{11+20\delta}+1)+1$.
 The conclusion follows.
 \end{proof}

 \begin{proof}[Proof of Theorem \ref{th:l2}]
 We build  the proof on the ideas of Cohen, Dahmen and DeVore \cite{jams1}. We give the full proof for completeness.

Using the triangle inequality, we have
 \begin{equation}\label{eq:triangle}
 \|x-x^*\|_2 \leq \|x-\beta_k(x)\|_2+\|\beta_k(x)-x^*\|_2=
 \sigma_k(x)_2+\|\beta_k(x)-x^*\|_2.
 \end{equation}
 Let  $\Omega_0$ and $\Omega(x-\beta_k(x))$  be, respectively, the set in the definition of RIP in  probability and the set in the definition of
 boundedness in probability for the vector $x-\beta_k(x)$.
 We  set $\Omega':=\Omega_0\cap \Omega(x-\beta_k(x))$. A simple observation
 is $P(\Omega')\geq 1-2\epsilon$.
Also, Theorem \ref{th:zhang} implies that
$$
\|y-\Phi x^*\|_2 \leq \sqrt{11+20\delta}\|y-\Phi
\beta_k(x)\|_2.
$$
Noting that ${1}/{\sqrt{1-\delta_{2\alpha k}}}\leq \sqrt{1+\delta}$,
for any $\omega \in \Omega'$, we obtain  that
 \begin{eqnarray*}
 \|\beta_k(x)-x^*\|_2 &\leq& \|\Phi(\beta_k(x)-x^*)\|_2/\sqrt{1-\delta_{\alpha k}}\leq \sqrt{1+\delta } \|\Phi(\beta_k(x)-x^*)\|_2\\
 &\leq & \sqrt{1+\delta }(\|y-\Phi x_T\|_2+\|y-\Phi x^*\|_2) \\
 &\leq &\sqrt{1+\delta}(1+\sqrt{11+20\delta})\|\Phi x-\Phi
 \beta_k(x)\|_2\\
 &\leq & \sqrt{C}
 \sqrt{1+\delta}(1+\sqrt{11+20\delta}) \|x-\beta_k(x)\|_2
 \\
 &=& \sqrt{C} \sqrt{1+\delta}(1+\sqrt{11+20\delta}) \sigma_k(x)_2.
 \end{eqnarray*}
 Here, the first inequality uses the RIP and $\# {\rm supp}(\beta_k(x)-x^*)\leq \alpha k$ and the last inequality uses the boundedness property in probability for $x-\beta_k(x)$.
 Combining (\ref{eq:triangle}) and the equation above, we have
 $$
\|x-x^*\|_2\leq \left(1+\sqrt{C(1+\delta)}(1+\sqrt{11+20\delta}) \right) \sigma_k(x)_2.
 $$
 \end{proof}

\bigskip \medskip

\noindent {\bf Authors' addresses:}

\medskip

\noindent Zhiqiang Xu,
LSEC, Inst.~Comp.~Math., Academy of Mathematics and Systems Science,
 Chinese Academy of Sciences, Beijing, 100091, China.
 {\tt Email: xuzq@lsec.cc.ac.cn}

\end{document}